\newtheorem{theorem}{Theorem}[section]
\newtheorem{definition}[theorem]{Definition}
\newenvironment{proof}[1][Proof]{\begin{trivlist}
\item[\hskip \labelsep {\bfseries #1}]}{\end{trivlist}}
\newcommand{\qed}{\nobreak \ifvmode \relax \else
	\ifdim\lastskip<1.5em \hskip- \lastskip
	\hskip 1.5em plus0em minus0.5em \fi \nobreak
	\vrule height0.75em width0.5em depth0.25em\fi}
\begin{document}

\title{Properties of Expansion-free Dynamical Stars}

\author{Abbas \surname{Sherif}}
\email{abbasmsherif25@gmail.com}
\affiliation{Astrophysics and Cosmology Research Unit, School of Mathematics, Statistics and Computer Science, University of KwaZulu-Natal, Private Bag X54001, Durban 4000, South Africa}

\author{Rituparno \surname{Goswami}}
\email{vitasta9@gmail.com}
\affiliation{Astrophysics and Cosmology Research Unit, School of Mathematics, Statistics and Computer Science, University of KwaZulu-Natal, Private Bag X54001, Durban 4000, South Africa}

\author{Sunil \surname{Maharaj}}
\email{maharaj@ukzn.ac.za}
\affiliation{Astrophysics and Cosmology Research Unit, School of Mathematics, Statistics and Computer Science, University of KwaZulu-Natal, Private Bag X54001, Durban 4000, South Africa}

\begin{abstract}
We study the geometrical and dynamical features of expansion-free dynamical stars in general relativity. Such stars can exist only if particular physical and geometric conditions are satisfied. Firstly, for trapping to exist in an expansion-free dynamical star, the star must accelerate and radiate simultaneously. If either are zero then the shear \(\Sigma\) must be zero through out the star, in which case the star is static (\(\Theta=\Sigma=0\)). Secondly, we prove that with nonzero acceleration and radiation, expansion-free dynamical stars must be conformally flat.
\end{abstract}
\maketitle

\section{Introduction}

Models of radiating stars in general relativity are important to describe astrophysical processes and to study gravitational collapse. Some recent examples of exact models which are physically reasonable were obtained by Tewari and Charan \cite{tac1}, Tewari \cite{tac2}, and Ivanov \cite{iv1,iv2,iv3}. Anisotrophy and dissipative effects have been shown to influence the collapse rate and temperature profiles in radiating stars by Reddy \textit{et al} \cite{red1}. It has been demonstrated that classes of exact solutions exist in general relativity, referred to as Euclidean stars, which regain Newtonian stars in the appropriate limit \cite{sant1,gov1,gov2}. The Lie analysis of differential equations using symmetry invariance has proved to be a systematic method to produce general categories of exact solutions to the boundary condition of radiating objects \cite{gov3,moh1,moh2}. An important class of radiating stars which are expansion-free was introduced by Herrera \textit{et al} \cite{lh1}. Expansion-free dynamical models implies the existence of a cavity or void. Matter distributions with a vanishing expansion scalar have to be inhomogeneous. These physical features should have important astrophysical consequences for spherically symmetric distributions. Studies containing the description of physical properties of expansion-free dynamical radiating stars are contained in several treatments \cite{lh5,kum1,kum2}. Therefore it is important to study the geometrical properties of expansion-free dynamical stars and find general conditions for their existence.

The aim of this paper is to investigate under what conditions can there be trapping in a relativistic expansion-free dynamical star. This analysis falls in the scope of stability analysis of self-gravitating systems (some of the references are given in \cite{mr1,jw1,nos1,gmm1,rittt1,rittt2}). We will consider the conditions on the acceleration and radiation quantities that allow for trapping in such stars. It is also an interesting exercise, with all the different quantities acting on such stars, to determine the geometry as these structures evolve. We will make use of the equivalent forms of the field equations from the \(1+1+2\) semi-tetrad covariant formulation of general relativity \cite{pg1,ts1,ts2,ts3,cc1,rit1}. The semi-tetrad formalism has been a useful approach in displaying geometrical features in a transparent fashion which are difficult to find using other approaches.

Various authors have explored expansion-free dynamical models with different considerations. The central theme of the interest in such models is the possibility that they could help explain the existence of voids on cosmological scales. In 2008, Herrera and co-authors \cite{lh1} studied such models with non-zero shear and showed that the appearance of a cavity (see reference \cite{pp1} for more discussion), with matter which is anisotropic and dissipative, undergoing explosion is inevitable. The same authors followed this result by a 2009 paper in \cite{lh2} in which they ruled out the Skripkin expansion-free dynamcal model (see reference \cite{skr1}) with constant energy density and isotropic pressure. Another study in \cite{lh3} involved the study of models collapsing adiabatically, and showed that the instability was independent of the star's stiffness. In particular, it was shown that the instability is entirely governed by the pressures and the radial profile of the energy density. 

In section \ref{02} we give a short overview of the \(1+1+2\) semi-tetrad formulation. In section \ref{03} we present the results of the paper. We conclude with a discussion of the results in section \ref{04}.

\section{Preliminaries}\label{02}

We provide some background material in this section, covering the \(1+1+2\) semi-tetrad covariant formalism as well as notes on and calculations of useful quantities, utilized in this paper. 

We start by explicitly defining \textit{locally rotationally symmetric} class II spacetimes \cite{ggff1,sge1}.
\begin{definition}
A \textbf{locally rotationally symmetric class} II (LRS II) spacetime is an evolving and vorticity free (zero rotation) and spatial twist free spacetime with a one dimensional isotropy group of spatial rotations defined at each point of the spacetime. It is given by the general line element \begin{eqnarray}\label{fork}
ds^2&=&-A^2\left(t,\chi\right)+ B^2\left(t,\chi\right)\notag \\
&&+ C^2\left(t,\chi\right) \left(dy^2+D^2\left(y,k\right)dz^2\right),
\end{eqnarray}
where \(t,\chi\) are parameters along integral curves of the timelike vector field \(u^a=A^{-1}\delta^a_0\) of a timelike congruence and the preferred spacelike vector \(e^a=B^{-1}\delta_{\nu}^a\) respectively. The constant \(k\) fixes the function \(D\left(y,k\right)\) (\(k=-1\) corresponds to \(\sinh y\), \(k=0\) corresponds to \(y\), \(k=1\) corresponds to \(\sin y\)) \cite{sge1}. 
\end{definition}
LRS II spacetimes generalize spherically symmetric spacetimes, and can be used to study astrophysical bodies such as stars and their evolution. From the line element in \eqref{fork} it is clear that most physically realistic and interesting spacetimes fall within the LRS II class. 

Let us next introduce the \(1+1+2\) covariant splitting of spacetime and the resulting fields equations for LRS II spacetimes \cite{cc1,rit1}. 

To start with, let (\(M,g_{ab}\)) be a spacetime manifold. To any timelike congruence we associate a unit vector field \(u^a\) tangent to the congruence for which \(u^au_a=-1\). Given any \(4\)-vector \(U^a\) in the spacetime, the projection tensor \(h_a^{\ b}\equiv g_a^{\ b}+u_au^b\), projects \(U^a\) onto the \(3\)-space as
\begin{eqnarray*}
U^a&=&Uu^a + U^{\langle a \rangle },
\end{eqnarray*}
where \(U\) is the scalar along \(u^a\) and \(U^{\langle a \rangle }\) is the projected \(3\)-vector \cite{ggff2}. This naturally gives rise to two derivatives:
\begin{itemize}
\item The \textit{covariant time derivative} (or simply the dot derivative)  along the observers' congruence. For any tensor \(S^{a..b}_{\ \ \ \ c..d}\), \(\dot{S}^{a..b}_{\ \ \ \ c..d}\equiv u^e\nabla_eS^{a..b}_{\ \ \ \ c..d}\).

\item Fully orthogonally \textit{projected covariant derivative} \(D\) with the tensor \(h_{ab}\), with the total projection on all the free indices. For any tensor \(S^{a..b}_{\ \ \ \ c..d}\), \(D_eS^{a..b}_{\ \ \ \ c..d}\equiv h^a_{\ f}h^p_{\ c}...h^b_{\ g}h^q_{\ d}h^r_{\ e}\nabla_rS^{f..g}_{\ \ \ \ p..q}\).
\end{itemize}
This \(1+3\) splitting irreducibly splits the covariant derivative of \(u^a\) as
\begin{eqnarray}\label{mmmn}
\nabla_au_b=-A_au_b+\frac{1}{3}h_{ab}\Theta+\sigma_{ab}.
\end{eqnarray}
In \eqref{mmmn}, \(A_a=\dot{u}_a\) is the acceleration vector, \(\Theta\equiv D_au^a\) is the expansion and \(\sigma_{ab}=D_{\langle b}u_{a\rangle}\) is the shear tensor (wherever used in this paper, angle brackets will denote the projected symmetric trace-free part of the tensor). LRS II spacetimes also have the property that the Weyl tensor is purely electric as the magnetically part of the Weyl tensor is identically zero (see reference \cite{cc1} for details). 

The splitting also allows for the energy momentum tensor to be decomposed as
\begin{eqnarray}
T_{ab}=\rho u_au_b + 2q_{(a}u_{b)} +ph_{ab} + \pi_{ab},
\end{eqnarray}
where \(\rho\equiv T_{ab}u^au^b\) is the energy density, \(q_a=-h_a^{\ c}T_{cd}u^d\) is the \(3\)-vector defining the heat flux, \(p\equiv\left(1/3\right)h^{ab}T_{ab}\) is the isotropic pressure and \(\pi_{ab}\) is the anisotropic stress tensor.

If there is a preferred unit normal spatial direction \(e^a\) as is the case with LRS II spacetimes, the metric \(g_{ab}\) can be split into terms along the \(u^a\) and \(e^a\) directions (the vector field \(e^a\) splits the \(3\)-space), as well as on the \(2\)-surface, i.e. 
\begin{eqnarray}
g_{ab}=N_{ab}-u_au_b+e_ae_b,
\end{eqnarray} 
where the projection tensor \(N_{ab}\) projects any two vector orthogonal to \(u^a\) and \(e^a\) onto the \(2\)-surface defined by the sheet \(N^{\ \ a}_a=2\) (\(u^aN_{ab}=0, \ e^aN_{ab}=0\)), and \(e^a\) is defined such that \(e^ae_a=1\) and \(u^ae_a=0\). This is referred to as the \(1+1+2\) splitting. This splitting of the spacetime additionally gives rise to the splitting of the covariant derivatives along the \(e^a\) direction and on the \(2\)-surface:
\begin{itemize}
\item The \textit{hat derivative} is the spatial derivative along the vector \(e^a\): for a \(3\)-tensor \(\psi_{a..b}^{\ \ \ \ c..d}\), \(\hat{\psi}_{a..b}^{\ \ \ \ c..d}\equiv e^fD_f\psi_{a..b}^{\ \ \ \ c..d}\).

\item The \textit{delta derivative} is the projected spatial derivative on the \(2\)-sheet by \(N_a^{\ b}\) and projected on all the free indices: for any \(3\)-tensor \(\psi_{a..b}^{\ \ \ \ c..d}\), \(\delta_e\psi_{a..b}^{\ \ \ \ c..d}\equiv N_a^{\ f}..N_b^{\ g}N_h^{\ c}..N_i^{\ d}N_e^{\ j}D_j\psi_{f..g}^{\ \ \ \ h..i}\).
\end{itemize} 

For LRS II spacetimes, the \(1+1+2\) covariant scalars fully describing the LRS II spacetimes are \cite{cc1}
\begin{eqnarray*}
\lbrace{A,\Theta,\phi, \Sigma, \mathcal{E}, \rho, p, \Pi, Q\rbrace}. 
\end{eqnarray*}
The quantity \(\phi\equiv\delta_ae^a\) is the sheet expansion, \(\Sigma\equiv\sigma_{ab}e^ae^b\) is the scalar associated to the shear tensor \(\sigma_{ab}\), \(\mathcal{E}\equiv E_{ab}e^ae^b\) is the scalar associated with the electric part of the Weyl tensor \(E_{ab}\), \(\Pi\equiv\pi_{ab}e^ae^b\) is the anisotropic stress scalar, \(Q\equiv -e^aT_{ab}u^b=q_ae^a\) is the scalar associated to the heat flux vector \(q_a\) .

The full covariant derivatives of the vector fields \(u^a\) and \(e^a\) are given by \cite{cc1}
\begin{eqnarray}\label{4}
\nabla_au_b&=&-Au_ae_b + e_ae_b\left(\frac{1}{3}\Theta + \Sigma\right) \notag \\
&&+ N_{ab}\left(\frac{1}{3}\Theta -\frac{1}{2}\Sigma\right),
\end{eqnarray}
\begin{eqnarray}\label{444}
\nabla_ae_b&=&-Au_au_b + \left(\frac{1}{3}\Theta + \Sigma\right)e_au_b +\frac{1}{2}\phi N_{ab}.
\end{eqnarray}
We also note the useful expression 
\begin{eqnarray}\label{redpen}
\hat{u}^a&=&\left(\frac{1}{3}\Theta+\Sigma\right)e^a.
\end{eqnarray}
When acting on a scalar \(\psi\), the dot (\(\dot{\ }\)) and hat (\(\hat{\ }\)) derivatives satisfy the commutation relation \cite{cc1}
\begin{eqnarray}\label{apr512}
\hat{\dot{\psi}} - \dot{\hat{\psi}} = -A\dot{\psi} + \left(\frac{1}{3}\Theta + \Sigma\right)\hat{\psi}.
\end{eqnarray}
This is a useful relation that will be utilized often in our calculations. 

The evolution and propagation equations may be obtained from using the Ricci identities of the vectors \(u^a\) and \(e^a\) as well as the doubly contracted Bianchi identities \cite{cc1,rit1}. The evolution and propagation equations are given as follows. (For full derivation of the equations see \cite{cc1}.)\\
\begin{itemize}
\item \textit{Evolution (LRS II):}
\begin{eqnarray}\label{evo1}
\frac{2}{3}\dot{\Theta}-\dot{\Sigma}&=&A\phi-2\left(\frac{1}{3}\Theta - \frac{1}{2}\Sigma\right)^2-\frac{1}{3}\left(\rho+3p\right)+\mathcal{E}\notag\\
&&-\frac{1}{2}\Pi,
\end{eqnarray}
\begin{eqnarray}\label{evo100}
\dot{\phi}&=&\left(\frac{2}{3}\Theta-\Sigma\right)\left(A-\frac{1}{2}\phi\right)+Q,
\end{eqnarray}
\begin{eqnarray}\label{evo101}
\dot{\mathcal{E}}-\frac{1}{3}\dot{\rho}+\frac{1}{2}\dot{\Pi}&=&-\frac{3}{2}\left(\frac{2}{3}\Theta-\Sigma\right)\mathcal{E}-\frac{1}{4}\left(\frac{2}{3}\Theta-\Sigma\right)\Pi \notag\\
&&+\frac{1}{2}\phi Q+\frac{1}{2}\left(\rho+p\right)\left(\frac{2}{3}\Theta-\Sigma\right).
\end{eqnarray}
\item \textit{Propagation (LRS II):}
\begin{eqnarray}\label{evo2}
\frac{2}{3}\hat{\Theta}-\hat{\Sigma}&=&\frac{3}{2}\phi \Sigma +Q,
\end{eqnarray}
\begin{eqnarray}\label{evo200}
\hat{\phi}&=&\left(\frac{1}{3}\Theta+\Sigma\right) \left(\frac{2}{3}\Theta-\Sigma\right)-\frac{1}{2}\phi^2 -\frac{2}{3}\rho-\mathcal{E}\notag\\
&&-\frac{1}{2}\Pi,
\end{eqnarray}
\begin{eqnarray}\label{evo201}
\hat{\mathcal{E}}-\frac{1}{3}\hat{\rho}+\frac{1}{2}\hat{\Pi}&=&-\frac{3}{2}\phi\left(\mathcal{E}+\frac{1}{2}\Pi\right)\notag\\
&&-\frac{1}{2}\left(\frac{2}{3}\Theta-\Sigma\right)Q.
\end{eqnarray}
\item \textit{Propagation/Evolution (LRS II):}
\begin{eqnarray}\label{evo3}
\hat{A}-\dot{\Theta}&=&-\left(A+\phi\right)A+\frac{1}{3}\Theta^2+\frac{3}{2}\Sigma^2\notag\\
&&+\frac{1}{2}\left(\rho+3p\right),
\end{eqnarray}
\begin{eqnarray}\label{evo300}
\hat{Q}+\dot{\rho}&=&-\Theta\left(\rho+p\right)-\left(\phi+2A\right)Q-\frac{3}{2}\Sigma\Pi,
\end{eqnarray}
\begin{eqnarray}\label{evo301}
\hat{p}+\hat{\Pi}+\dot{Q}&=&-\left(\frac{3}{2}\phi+A\right)\Pi-\left(\frac{4}{3}\Theta+\Sigma\right)Q\notag\\
&&-\left(\rho+p\right)A.
\end{eqnarray}
\end{itemize}

The outgoing null expansion, whose vanishing necessitates trapping, has been calculated in references \cite{rit1,abs1} as
\begin{eqnarray}\label{piapia1}
\Theta_k&=&\frac{1}{\sqrt{2}}\left(\frac{2}{3}\Theta-\Sigma+\phi\right).
\end{eqnarray}
The equation of the outgoing null expansion scalar here corresponds to equation \(\left(32\right)\) of \cite{rit1}, but we have unitized the energy function for our choice of the outgoing null normal vector field \(k^a\), whose divergence gives \(\Theta_k\). It is clear from \eqref{piapia1} that even with the vanishing of \(\Theta\), it is still possible to have trapping. This is the main focus of this work and is investigated in the next section.

\section{Results}\label{03}
In this section we state and prove the results of the paper.

\subsection{Dynamics of expansion-free stars}
We state and prove the following theorem.
\begin{theorem}\label{tth1}
\textit{An expansion-free dynamical star must accelerate and radiate simultaneously.}
\end{theorem}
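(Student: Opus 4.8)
Recalling that an expansion-free star has \(\Theta=0\), and that here a \emph{static} configuration means \(\Theta=\Sigma=0\) while a \emph{dynamical} one has \(\Sigma\neq0\), the theorem is equivalent to its contrapositive: if \(\Theta=0\) and \emph{either} \(A=0\) \emph{or} \(Q=0\) throughout the star, then \(\Sigma=0\) throughout. The plan is to prove this in the two cases separately, in each case imposing \(\Theta=0\) together with its preservation \(\dot{\Theta}=\hat{\Theta}=0\) in the field equations \eqref{evo1}--\eqref{evo301}.

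For the non-accelerating case, take \(A\equiv0\), so that also \(\hat{A}=0\). With \(\Theta=0\) the identity \eqref{evo3} collapses to \(0=\tfrac32\Sigma^2+\tfrac12(\rho+3p)\). For matter admissible in a stellar interior one has \(\rho+3p\geq0\) (the strong energy condition; indeed \(\rho>0\) and \(p\geq0\)), so the two manifestly non-negative terms each vanish and \(\Sigma=0\). This half is immediate.

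For the non-radiating case, take \(Q\equiv0\). Then \eqref{evo2} gives \(\hat{\Sigma}=-\tfrac32\phi\Sigma\), \eqref{evo100} gives \(\dot{\phi}=-\Sigma(A-\tfrac12\phi)\), and \eqref{evo300} gives \(\dot{\rho}=-\tfrac32\Sigma\Pi\), while preservation of \(\Theta=0\) turns \eqref{evo1} and \eqref{evo3} into explicit expressions for \(\dot{\Sigma}\) and \(\hat{A}\). I would then differentiate the shear constraint \(\hat{\Sigma}+\tfrac32\phi\Sigma=0\) along \(u^a\), using the dot--hat commutator \eqref{apr512} (which for \(\Theta=0\) reads \(\hat{\dot{\psi}}-\dot{\hat{\psi}}=-A\dot{\psi}+\Sigma\hat{\psi}\)), eliminate \(\hat{A},\hat{\phi},\hat{\rho},\hat{p},\hat{\mathcal{E}},\hat{\Pi}\) with \eqref{evo200}, \eqref{evo201} and \eqref{evo301}, and reduce to an algebraic relation among \(\Sigma,\phi,A\) and the matter scalars. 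Combined with the matching conditions at the surfaces bounding the matter — for an expansion-free model a shell enclosing a vacuum cavity, where \(Q=0\) forces the radial pressure \(p+\Pi\) to vanish — and with a positivity/energy condition, this should again force \(\Sigma=0\).

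The non-radiating case is the main obstacle. Because the LRS~II system \eqref{evo1}--\eqref{evo301} already satisfies its own integrability conditions, a careless propagation of the \(\Theta=0\) constraints (or the commutator applied to a single scalar) collapses to the identity \(0=0\) and yields nothing; the content only emerges once a genuine physical input is fed in — either an energy condition bearing on one specific reduced equation, or the junction and regularity conditions at the cavity wall and the stellar surface. Identifying the minimal such input that pins down \(\Sigma=0\) when \(Q=0\) is where the real work lies, in contrast with the \(A=0\) case, which needs only \eqref{evo3} and \(\rho+3p\geq0\).
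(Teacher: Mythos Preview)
Your \(A=0\) argument is a genuine shortcut, but it buys its simplicity by adding a hypothesis the paper deliberately avoids. The paper notes that if \(\Sigma\neq0\) then \eqref{evo3} forces \(\rho+3p<0\), and then proceeds \emph{without} assuming the strong energy condition: through a chain of commutators on \(\phi\), \(\Sigma\), \(p\) and \(\rho\) it derives enough algebraic constraints to force \(\Sigma=0\) regardless. So your half-line proof is correct for matter obeying the SEC, but it is a strictly weaker statement than what is actually established.

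The \(Q=0\) half, however, is not a proof but a plan, and your central worry --- that the commutator machinery collapses to \(0=0\) unless one feeds in junction conditions or an energy inequality --- is misplaced. The paper's Case~2 (\(A\neq0\), \(Q=0\)) is in fact the shortest of the three: take the hat of \eqref{evo1} and the dot of \eqref{evo2}, apply the commutator \eqref{apr5121}, and after substituting \(\hat A,\hat\phi,\hat\Sigma,\hat{\mathcal E},\hat\Pi,\hat p\) from the propagation equations one is left with the single relation \(A\Sigma^2=0\). Since \(A\neq0\), this gives \(\Sigma=0\) directly, with no energy condition and no boundary data. Your proposed route (propagating \(\hat\Sigma+\tfrac32\phi\Sigma=0\) along \(u^a\)) is essentially the same computation, and had you carried it out you would have found exactly this constraint rather than an identity. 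The remaining case \(A=0\), \(Q\neq0\) is handled similarly in the paper: the commutator on \(\phi\) yields \(\phi Q=0\), hence \(\phi=0\), and then \eqref{evo100} forces \(Q=0\), a contradiction. None of the three cases requires matching at the cavity wall or the stellar surface.
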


\begin{proof}
We establish this by fixing both the acceleration and the heat flux to zero, and then by fixing either of the acceleration or the heat flux to zero.

\subsubsection{Case 1:}

First, suppose \(A=0\ \ \text{and}\ \ Q=0\). From \eqref{evo3} we have the algebraic constraint equation
\begin{eqnarray}\label{apr511}
0&=&\frac{3}{2}\Sigma^2+\frac{1}{2}\left(\rho+3p\right).
\end{eqnarray}
Here we note that since \(\Sigma^2>0\), \eqref{apr511} implies that the strong energy condition must be violated, i.e. \(\rho+3p<0\). For \(A=\Theta=0\), \eqref{apr512} is simply
\begin{eqnarray}\label{apr513}
\hat{\dot{\psi}} - \dot{\hat{\psi}} = \Sigma\hat{\psi}.
\end{eqnarray}
Taking the hat derivative of \eqref{evo100} and the dot derivative of \eqref{evo200} we obtain respectively
\begin{eqnarray}\label{hhhh1}
\hat{\dot{\phi}}&=&\frac{1}{2}\hat{\phi}\Sigma+\frac{1}{2}\phi\hat{\Sigma}\notag\\
&=&-\left(\phi^2+\frac{1}{2}\Sigma^2+\frac{1}{3}\rho+\frac{1}{2}\mathcal{E}+\frac{1}{4}\Pi\right)\Sigma
\end{eqnarray}
and
\begin{eqnarray}\label{hhhh2}
\dot{\hat{\phi}}&=&-2\Sigma\dot{\Sigma}-\phi\dot{\phi}-\frac{2}{3}\dot{\rho}-\dot{\mathcal{E}}-\frac{1}{2}\dot{\Pi}\notag\\
&=&-\left(\frac{1}{2}\phi^2+\Sigma^2+\frac{1}{6}\rho-\frac{1}{2}\mathcal{E}+\frac{3}{2}p\right)\Sigma.
\end{eqnarray}
Using the commutation relation on \eqref{hhhh1} and \eqref{hhhh2}, we obtain
\begin{eqnarray}\label{apr515}
\left[\frac{1}{4}\Pi+\frac{3}{2}\Sigma^2 + \frac{1}{2}\left(\rho+3p\right)\right]\Sigma=0.
\end{eqnarray}
So either \(\Sigma=0\) or
\begin{eqnarray*}
\frac{1}{4}\Pi+\frac{3}{2}\Sigma^2 + \frac{1}{2}\left(\rho+3p\right)=0. 
\end{eqnarray*}
If \(\Sigma=0\) then the star must be static (\(\Theta=\Sigma=0\)), so we assume that \(\Sigma\neq0\) and that 
\begin{eqnarray}\label{augie1001}
\frac{1}{4}\Pi+\frac{3}{2}\Sigma^2 + \frac{1}{2}\left(\rho+3p\right)=0.
\end{eqnarray}
From \eqref{apr511}, \eqref{augie1001} implies that \(\Pi=0\). Now if we take the dot derivative of \eqref{apr511} and substitute for \eqref{evo1} and \eqref{evo300}, we obtain the evolution of \(p\)
\begin{eqnarray}\label{apr516}
\dot{p}=\left(\Sigma^2+2\mathcal{E}\right)\Sigma.
\end{eqnarray}
Taking the hat derivative of \eqref{apr516} and the dot derivative of \eqref{evo301} we obtain respectively
\begin{eqnarray}\label{nngg1}
\hat{\dot{p}}=\left(-\frac{9}{2}\phi\Sigma^2-6\phi\mathcal{E}+\frac{2}{3}\hat{\rho}\right)\Sigma
\end{eqnarray}
and
\begin{eqnarray}\label{apr517}
\dot{\hat{p}}=0.
\end{eqnarray}
Using the commutation relation on \eqref{nngg1} and \eqref{apr517} we obtain the propagation of \(\rho\)
\begin{eqnarray}\label{apr518}
\hat{\rho}=9\phi\left(\frac{3}{4}\Sigma^2+\mathcal{E}\right), 
\end{eqnarray}
Now taking the hat derivative of \eqref{evo1} and the dot derivative of \eqref{evo2}, we obtain respectively 
\begin{eqnarray}\label{augie10090}
\hat{\dot{\Sigma}}&=&\Sigma\hat{\Sigma}-\hat{\mathcal{E}}\notag\\
&=&-\frac{3}{2}\phi\Sigma^2-\hat{\mathcal{E}}
\end{eqnarray}
and
\begin{eqnarray}\label{augie10091}
\dot{\hat{\Sigma}}&=&-\frac{3}{2}\left(\dot{\phi}\Sigma+\phi\dot{\Sigma}\right)\notag\\
&=&-\frac{3}{2}\phi\mathcal{E}.
\end{eqnarray}
Using the commutation relation on \eqref{augie10090} and \eqref{augie10091} we obtain 
\begin{eqnarray}\label{augie10092}
\hat{\mathcal{E}}=\frac{3}{2}\phi\left(\Sigma^2-\mathcal{E}\right),
\end{eqnarray}
which upon substituting in \eqref{evo201} we obtain
\begin{eqnarray}\label{augie10093}
\hat{\rho}=\frac{9}{2}\phi\Sigma^2.
\end{eqnarray}
Comparing \eqref{apr518} and \eqref{augie10093} we get 
\begin{eqnarray}\label{augie10094}
\phi\left(\mathcal{E}+\frac{1}{4}\Sigma^2\right)=0.
\end{eqnarray}
Therefore we must have either \(\phi=0\) or
\begin{eqnarray}\label{augie100100}
\mathcal{E}=-\frac{1}{4}\Sigma^2.
\end{eqnarray}
We show that either case yields \(\Sigma=0\), in which case the star is static. First, suppose \(\phi=0\). Then \eqref{evo200} gives the constraint equation
\begin{eqnarray}\label{apr519}
\Sigma^2=-\frac{2}{3}\rho-\mathcal{E}, 
\end{eqnarray}
and comparing \eqref{apr511} and \eqref{apr519} we obtain
\begin{eqnarray}\label{apr520}
\mathcal{E}=-\frac{1}{3}\left(\rho-3p\right).
\end{eqnarray}
Taking the dot derivative of \eqref{apr520}, using \eqref{evo101} and \eqref{apr516} we obtain
\begin{eqnarray}\label{apr521}
\Sigma\left[\Sigma^2+\frac{1}{2}\left(\mathcal{E}+\rho+p\right)\right]=0.
\end{eqnarray}
Again, assuming \(\Sigma\neq0\) we must have 
\begin{eqnarray}\label{aprl1}
\Sigma^2=-\frac{1}{2}\left(\mathcal{E}+\rho+p\right).
\end{eqnarray}
Taking the dot derivative of \eqref{aprl1} and using \eqref{evo1} we obtain
\begin{eqnarray}\label{aprl2}
\dot{p}=\Sigma\left(4\mathcal{E}-2\Sigma^2\right).
\end{eqnarray}
and upon comparing to \eqref{apr516} we obtain
 \begin{eqnarray}\label{aprl3}
\Sigma\left(2\mathcal{E}-3\Sigma^2\right)=0.
\end{eqnarray}
Since \(\Sigma\neq0\) we have 
 \begin{eqnarray}\label{aprl4}
\mathcal{E}=\frac{3}{2}\Sigma^2,
\end{eqnarray}
which, uupon sing \eqref{apr511} and \eqref{apr520} we obtain
\begin{eqnarray}\label{aprl5}
\rho=-\frac{5}{3}p.
\end{eqnarray}
Taking the dot derivative of \eqref{aprl5} we have \(\dot{p}=0\). Setting \eqref{apr516} to zero, while using \eqref{aprl4} to substitute for \(\mathcal{E}\) gives \(\Sigma^2=0\) which gives \(\Sigma=0\).

Next assume \(\phi\neq0\) and that \eqref{augie100100} is satisfied. Now taking the dot derivative of \eqref{augie100100}, and using \eqref{evo101}, \eqref{evo1} and \eqref{augie100100} to simplify we obtain
\begin{eqnarray}\label{augie10095}
\left[\frac{3}{4}\Sigma^2+\rho+p\right]\Sigma=0,
\end{eqnarray}
Assume \(\Sigma\neq0\). We must have
\begin{eqnarray}\label{augie100111}
\frac{3}{4}\Sigma^2+\rho+p=0,
\end{eqnarray}
Using \eqref{apr511}, \eqref{augie100111} simplifies to
\begin{eqnarray}\label{augie10096}
\rho=-\frac{14}{5}p.
\end{eqnarray}
Finally taking the dot derivative of \eqref{augie10096} we have \(\dot{p}=0\), which upon comparing to \eqref{apr516} and substituting for \(\mathcal{E}\) using \eqref{augie100100} we obtain \(\Sigma^2=0\) which gives \(\Sigma=0\).

\subsubsection{Case 2:}

Let us next consider the case \(A\neq 0\) and \(Q=0\). The commutation relation \eqref{apr512}, now becomes
\begin{equation}\label{apr5121}
\hat{\dot{\psi}} - \dot{\hat{\psi}} = -A\dot{\psi} + \Sigma\hat{\psi}.
\end{equation}
Taking the hat derivative of \eqref{evo101} and the dot derivative of \eqref{evo201}, we obtain respectively 
\begin{eqnarray}\label{augie10030}
\hat{\dot{\Sigma}}&=&-\hat{A}\phi-A\hat{\phi}+\Sigma\hat{\Sigma}+\frac{1}{3}\hat{\rho}+\hat{p}-\hat{\mathcal{E}}+\frac{1}{2}\hat{\Pi}\notag\\
&=&A^2\phi +\frac{3}{2}A\phi^2 - 3\phi\Sigma^2-\frac{1}{3}A\rho-\frac{3}{4}\phi\Pi -\frac{1}{2}\phi\rho\notag\\
&& - \frac{3}{2}\phi p + A\mathcal{E} - \frac{1}{2}A\Pi - Ap+\frac{3}{2}\phi\mathcal{E}
\end{eqnarray}
and
\begin{eqnarray}\label{augie10031}
\dot{\hat{\Sigma}}&=&-\frac{3}{2}\left(\dot{\phi}\Sigma+\phi\dot{\Sigma}\right)\notag\\
&=&\frac{3}{2}A\Sigma^2-\frac{3}{2}\phi\Sigma^2 + \frac{3}{2}A\phi^2 - \frac{1}{2}\phi\rho-\frac{3}{2}\phi p \\
&&+\frac{3}{2}\phi\mathcal{E}-\frac{3}{4}\phi\Pi.
\end{eqnarray}
Using the commutation relation on \eqref{augie10030} and \eqref{augie10031} we obtain 
\begin{eqnarray}\label{apr534}
A\Sigma^2=0.
\end{eqnarray}
Since \(A\neq 0\), we must have \(\Sigma=0\), and thus the star is static.

\subsubsection{Case 3:}

Finally, we consider the case \(A=0\) and \(Q\neq 0\). From \eqref{evo3} we have the constraint equation as \eqref{apr511}. The commutation relation in this case is \eqref{apr513}. Taking the hat derivative of \eqref{evo100} and the dot derivative of \eqref{evo200}, we obtain respectively
\begin{eqnarray}\label{augie10020}
\hat{\dot{\phi}}&=&\frac{1}{2}\hat{\Sigma}\phi+\frac{1}{2}\Sigma\hat{\phi}\notag\\
&=&-\phi^2\Sigma - \frac{1}{2}\phi Q -\frac{1}{2}\Sigma^3-\frac{1}{3}\Sigma\rho-\frac{1}{2}\Sigma\mathcal{E}-\frac{1}{4}\Sigma\Pi\notag\\
&&+\hat{Q}
\end{eqnarray}
and
\begin{eqnarray}\label{augie10021}
\dot{\hat{\phi}}&=&-2\Sigma\dot{\Sigma}-\phi\dot{\phi}-\frac{2}{3}\dot{\rho}-\left(\dot{\mathcal{E}}+\frac{1}{2}\dot{\Pi}\right)\notag\\
&=&-\Sigma^3-\frac{1}{6}\Sigma\rho - \frac{5}{4}\Sigma\Pi + \frac{1}{2}\Sigma\mathcal{E}+\frac{1}{2}\Sigma p - \dot{\rho}.
\end{eqnarray}
Using the commutation relation on \eqref{augie10020} and \eqref{augie10021} we obtain
\begin{eqnarray}\label{apr545}
\left[\frac{3}{2}\Sigma^2+\frac{1}{2}\left(\rho+3p\right)\right]\Sigma&=&\phi Q,
\end{eqnarray}
which, upon using \eqref{apr511} reduces to
\begin{eqnarray}\label{apr546}
\phi Q &=& 0.
\end{eqnarray}
Since \(Q\neq 0\), we must have \(\phi=0\). But from \eqref{evo100}, this gives \(Q=0\). 

From the three cases considered, we therefore must have \(A\neq 0\) and \(Q\neq 0\) to have an expansion-free star that is evolving.\qed
\end{proof}

\subsection{Geometry of expansion-free stars}
We state and prove the following theorem on the geometry of expansion-free dynamical stars.
\begin{theorem}\label{tth2}
\textit{An expansion-free dynamical star must be conformally flat.}
\end{theorem}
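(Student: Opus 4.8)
Since an LRS~II spacetime has an identically vanishing magnetic Weyl tensor and its electric part is carried entirely by the single scalar $\mathcal{E}$ (one has $E_{ab}=\mathcal{E}\,(e_ae_b-\tfrac12 N_{ab})$), conformal flatness is equivalent to $\mathcal{E}=0$. The plan is therefore to show that, for an expansion-free star that is genuinely dynamical, one is forced to have $\mathcal{E}=0$. By Theorem~\ref{tth1} we may work under $\Theta=0$, $A\neq0$, $Q\neq0$, and also $\Sigma\neq0$ (otherwise the star is static); the relevant reduced system is \eqref{evo1}--\eqref{evo301} with $\Theta=0$, together with the commutation relation \eqref{apr5121}, namely $\hat{\dot\psi}-\dot{\hat\psi}=-A\dot\psi+\Sigma\hat\psi$.

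The method is the cross-differentiation technique already used for Theorem~\ref{tth1}. For a chosen scalar $\psi$, its evolution equation supplies $\dot\psi$ and its propagation equation supplies $\hat\psi$; taking the hat derivative of the former and the dot derivative of the latter and then eliminating every first-order derivative through \eqref{evo1}--\eqref{evo301} (and \eqref{evo3} for $\hat A$), the identity \eqref{apr5121} collapses to a purely algebraic relation among $\{A,\phi,\Sigma,\mathcal{E},\rho,p,\Pi,Q\}$. I would run this first on $\psi=\phi$ (using \eqref{evo100} and \eqref{evo200}) and on $\psi=\Sigma$ (using \eqref{evo1} and \eqref{evo2}, and, independently, the Weyl equations \eqref{evo101} and \eqref{evo201}), collecting one constraint per pass; the constraint produced from the $\mathcal{E}$-equations is the crucial one, since it is linear in $\mathcal{E}$.

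Each algebraic constraint so obtained is then itself differentiated (dot and hat), reduced again by \eqref{evo1}--\eqref{evo301}, and the outputs compared; this closes the system after finitely many rounds. As in Theorem~\ref{tth1}, every round throws up a factorised alternative, some product of scalars being forced to vanish; because $A$, $Q$ and $\Sigma$ are all nonzero, each such alternative collapses to a genuine identity, and the surviving chain is arranged to terminate in $\mathcal{E}=0$. I expect the endgame to be a short linear elimination combining the $\mathcal{E}$-linear constraint with the $\phi$- and $\Sigma$-constraints and the purely spatial relation \eqref{evo200}, which isolates $\mathcal{E}$.

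The principal difficulty is that, in contrast with Case~1 of Theorem~\ref{tth1} where $A=Q=0$ delivered the clean algebraic first integral \eqref{apr511} to anchor the whole argument, here \eqref{evo3} only expresses $\hat A$ in terms of the remaining scalars and no comparable algebraic relation survives; consequently all eight scalars stay active and the elimination is a heavy bookkeeping exercise. The real content of the proof is organisational: ordering the cross-differentiations so that each step removes a derivative rather than generating new higher-order ones, and verifying at each branch point that the alternative consistent with $A,Q,\Sigma\neq0$ is the one that advances the argument. That is where I expect essentially all of the work to lie.
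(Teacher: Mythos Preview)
Your approach is correct and is essentially the paper's: the cross-differentiation on $\psi=\Sigma$ via \eqref{evo1} and \eqref{evo2} that you list among your passes is exactly the computation the paper performs. You substantially overestimate the work, though: that single commutation, once \eqref{evo301} is used to eliminate $\dot Q+\hat p+\hat\Pi$, collapses directly to $A\mathcal{E}=0$, so no further rounds, no $\phi$- or Weyl-pair constraints, and none of the ``heavy bookkeeping'' you anticipate are needed.
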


\begin{proof}
We prove this by checking for additional constraints from the field equations with \(A\neq 0\) and \(Q\neq 0\). The commutation relation in this case is given by \eqref{apr5121}. Taking the hat derivative of \eqref{evo1} we obtain 
\begin{eqnarray}\label{apr611}
\hat{\dot{\Sigma}}&=&A^2\phi + \frac{3}{2}A\phi^2 - 3\phi\Sigma^2 + A\Sigma^2 - \frac{1}{2}\phi\rho - \frac{3}{2}\phi p\notag \\ &&+ \frac{2}{3}A\rho + \frac{1}{2}A\Pi - \frac{3}{2}\Sigma Q + \frac{3}{2}\phi\mathcal{E}+ \frac{3}{4}\phi\Pi + \hat{p}\notag\\
&& + \hat{\Pi},
\end{eqnarray}
and taking the dot derivative of \eqref{evo2} we have
\begin{eqnarray}\label{apr612}
\dot{\hat{\Sigma}}&=&\frac{3}{2}A\Sigma^2 - \frac{3}{2}\phi\Sigma^2 - \frac{3}{2}\Sigma Q + \frac{3}{2}A\phi^2 - \frac{1}{2}\phi\rho \notag\\
&&- \frac{1}{2}\phi p + \frac{3}{2}\phi \mathcal{E} - \frac{3}{4}\phi\Pi - \dot{Q}.
\end{eqnarray}
Taking the difference of \eqref{apr611} and \eqref{apr612} and employing the commutation relation \eqref{apr5121}, we obtain

\begin{eqnarray}\label{apr613}
-A\dot{\Sigma} + \Sigma\hat{\Sigma}&=&A^2\phi - \frac{3}{2}\phi\Sigma^2 - \frac{1}{2}A\Sigma^2 + \frac{3}{2}\phi\Pi \notag\\
&&+ \frac{2}{3}A\rho + \frac{1}{2}A\Pi + \dot{Q} + \hat{p} + \hat{\Pi},
\end{eqnarray}
which upon using \eqref{evo301} and simplifying gives
\begin{eqnarray}\label{apr556}
A\mathcal{E}=0.
\end{eqnarray}
Since \(A\neq 0\), we must have \(\mathcal{E}=0\) so that the electric part of the Weyl tensor is vanishing. Fixing \(\mathcal{E}=0\) in the field equations for \(A\neq 0, Q\neq 0\), all other commutation relations on pairs of evolution and propagation equations return identities.\qed
\end{proof}

\section{Discussion}\label{04}

The expansion-free condition in general relativity has received considerable attention in recent years and has been applied to describe physical features of radiating stars. We have utilized the \(1+1+2\) semi-tetrad covariant formalism to study such stars in general in spherical symmetry. The analysis shows that expansion-free dynamical stars are severely constrained, and can only exist under very particular conditions. From the set of field equations, we have explicitly shown that a necessary condition for a star with zero expansion to evolve is that the star has non-zero radiation and acceleration. With further analysis of the field equations with \(A\neq 0\) and \(Q\neq 0\), it is shown that the star is necessarily conformally flat. Proving these results amount to the analysis of the field equation via commutation relations, via which we obtain additional constraints, which further give us additional evolution equations that can be matched against the original set of equations. These results add to the literature on expansion-free dynamical stars, which have been developed over the last decade and half, most notably through works of Herrera and co-authors.

\section*{Acknowledgements}
The authors would like to thank the anonymous referee for the very kind and valuable comments that have helped this paper to read clearer. AS and RG are supported by the National Research Foundation (NRF), South Africa. SDM acknowledges that this work is based on research supported by the South African Research Chair Initiative of the Department of Science and Technology and the National Research Foundation.

\end{document}